\newcommand{\ab}{{\mathbf a}}
\newcommand{\bb}{{\mathbf b}}
\newcommand{\nb}{{\mathbf n}}
\newcommand{\xb}{{\mathbf x}}
\newtheorem{definition}{Definition}
\newtheorem{theorem}{Theorem}
\title{Exact Dynamic Support Tracking with Multiple Measurement Vectors using Compressive MUSIC}
\author{Jong Min Kim, Ok Kyun Lee and Jong Chul Ye}
\begin{document}
%
\maketitle
\begin{abstract}
\baselineskip 0.18in
Dynamic tracking of sparse targets has been one of the important topics in array signal processing. Recently, compressed sensing (CS) approaches have been extensively investigated as a new tool for this problem using partial support information obtained by exploiting temporal redundancy. However, most of these approaches are formulated under single measurement vector   compressed sensing (SMV-CS) framework, where the performance guarantees are only in a probabilistic manner. The main contribution of this paper is to  allow \textit{deterministic} tracking of time varying supports with multiple measurement vectors (MMV) by exploiting multi-sensor diversity. In particular, we show that  a novel compressive MUSIC (CS-MUSIC) algorithm with optimized partial support selection not only  allows removal of inaccurate portion of previous support estimation but also  enables addition of  newly emerged part of unknown support. Numerical results confirm the theory.
\end{abstract}
\begin{keywords}
Compressed sensing, joint sparsity, time varying signal, compressive MUSIC, optimized partial support selection
\end{keywords}
\baselineskip 0.18in

\vspace*{1cm}

\noindent{ 
\\

Correspondence to:\\
Jong Chul Ye,  Ph.D.
Associate Professor \\\
Dept. of Bio and Brain Engineering,  KAIST \\
373-1 Guseong-dong  Yuseong-gu, Daejon 305-701, Korea \\
Email: jong.ye@kaist.ac.kr \\
Tel: 82-42-350-4320 \\
Fax: 82-42-350-4310 \\
 }

\IEEEpeerreviewmaketitle

\newpage
\baselineskip 0.29in

\section{Introduction}
\label{sec:intro}

\IEEEPARstart{D}{ynamic} target tracking problem that addresses the estimation of   time varying support of moving target  has  been one of the important classical topics in array signal processing including radar,  communication, and medical imaging applications. For example, in electroencephalography (EEG) or magnetoencephalography (MEG) source localization problems, it has been shown that the position of the dipole moments during epileptic activities varies according to time  and we are interested in their spatio-temporal dynamics \cite{Zhang2011TSBL}. Dynamic MRI problem that tracks the motion of hearts also belongs to this class of problem.

Recently, there  have been renewed interests for this problem with the help of a modern mathematical tool called compressed sensing \cite{Do06,CaRoTa06}. These approaches try to exploit knowledges of partial support information obtained  at the previous time point.
More specifically, consider the following time varying support estimation problem:
\begin{eqnarray}\label{eq:timevaryingSMV}
\min\limits_{\xb(t)}~\|\xb(t)\|_0,~~{\rm subject~to}~~\bb(t)=A\xb(t) ,~~t=0,1,\cdots,
\end{eqnarray}
where $\bb(t) \in \mathbb{R}^{m}$, and $\xb(t) \in \mathbb{R}^{n}$ are noiseless measurement vector, and sparse signal at time $t$. Assuming that the support is assumed to change slowly, theoretical results \cite{Vaswani2010MCS} have demonstrated that we can reduce the required sampling in compressed sensing reconstruction  if we have partially known support from the prior estimation results. For example, Vaswani and Lu proposed modified-CS algorithm \cite{Vaswani2010MCS} which addresses the exact reconstruction of noiseless case with partially known support:
\begin{eqnarray}\label{eq:timevaryingSMVmodl0}
\min\limits_{\xb(t)}~\|(\xb(t))_{{I{(t-1)}}^c}\|_0,~~{\rm subject~to}~~\bb(t)=A\xb(t),~~t=1,2,\cdots,
\end{eqnarray}
where $I(t-1)$ is the previously estimated support, and $(\xb(t))_{{I{(t-1)}}^c}$ denotes a subvector after removing the elements that correspond to the index set $I(t-1)$. Suppose, furthermore,  $k=|{\rm supp}\xb(t)|_0$, $u=|I(t)\setminus I(t-1)|$, and $e=|I(t-1) \setminus I(t)|$. Then, if the restricted isometry constant (RIP)  for the sensing matrix $A$ satisfies
\begin{eqnarray}\label{eq:timevaryingSMVmodl0RIP}
\delta_{k+e+u}<1,
\end{eqnarray}
then the solution $\xb(t)$ of Eq.~\eqref{eq:timevaryingSMVmodl0} is the unique solution \cite{Vaswani2010MCS}. This is much weaker than $0\leq \delta_{2k}<1$ for the original SMV-CS problem \cite{CaTa05}, in case of  slowly time varying support with $u \ll k$ and $e \ll k$.
They further showed an $l_1$ convex relaxation of  Eq.~\eqref{eq:timevaryingSMVmodl0}
 can provide the same $l_0$ solution of Eq.~\eqref{eq:timevaryingSMVmodl0}, if the following RIP condition is satisfied:
\begin{eqnarray}\label{eq:timevaryingSMVmodl1RIP}
2\delta_{2u} + \delta_{3u} + \delta_{k+e-u} + \delta_{k+e}^2 + 2\delta_{k+e+u}^2 < 1 ,
\end{eqnarray}
which is again relaxed sampling requirement than that of original CS problem $\delta_{2k}<\sqrt{2}-1$ \cite{CaTa05}.  Therefore, exploiting the temporal redundancy has significant impact for reducing sampling requirement for dynamic support tracking. 

Rather than solving the tracking problem Eq.~\eqref{eq:timevaryingSMV} at each time, batch type approaches such as  T-SBL (temporal sparse Bayesian learning) \cite{Zhang2011TSBL} collect the multiple snapshot data (for example, $\{\bb(t)\}_{t=1}^N$) and process them together to estimate the dynamic varying support.  Note that if the support changes slowly over time, then the resulting collection of problem becomes an multiple measurement vector problem. Accordingly,  T-SBL  converts the resulting MMV problem into a block-sparse SMV problem, after which each block statistics are modeled using a specific Gaussian form temporal correlation structure. The update rule using the expectation-maximization (EM) method and its accelerated version  can be then used to solve the resulting Bayesian problem \cite{Zhang2011TSBL}.

However, these  approaches for dynamic support tracking is with SMV-CS framework  and their performance guarantees is  in a probabilistic sense.
In practice, there are many situations where we can obtain multiple measurement vector information for time varying objects. 
For example, in single-input multiple-output (SIMO) multiple access channel (MAC), multiple antenna can observe linear combination of individual codewords multiplied by the unknown channel gain from the individual user \cite{jin2011support}.  In parallel MR cardiac imaging,  multiple coils  simultaneous obtain k-space measurements of temporally varying hearts  with distinct coil sensitivities.  In EEG/MEG source localization problem, the dipole moments can be assumed relatively stationary during a short time window from which multiple snapshot of the sensor measurement can be obtained. All these examples acquire multiple measurement of the unknown signal vectors that share the same support with different weighting through identical sensing matrices.

A fundamental question under this setup is what kind of diversity gain we can obtain over SMV-CS support tracking. To our knowledge, we are not aware of any prior investigation in this regard.
One of the main contributions of this paper is to show that a multiple measurement vector (MMV) framework not only extend the SMV counterpart, but also provides a unique advantage of ``deterministic'' support tracking for slow varying support estimation. Recall that MMV can measure multiple information of a set of vector that share the same sparsity pattern through the identical sensing matrix. This paper shows that this joint sparsity pays off significantly in dynamic support tracking by relaxing probabilistic guarantee to a deterministic guarantee. The feasibility of the exact support tracking has significant impacts in practice. 

The breakthrough  is based on our novel compressive multiple signal classification  (CS-MUSIC)  algorithm in MMV compressed sensing problem \cite{Kim2010CMUSIC}, in which a part of supports are found probabilistically using the conventional CS, after which the remaining supports are determined deterministically using the generalized MUSIC criterion. In addition, CS-MUSIC allows us to find all $k$ support   as long as at least $k-r+1$ support out of any $k$-support estimate are correct \cite{Kim2011CSMUSIC}, where $r$ denote the rank of the measurement matrix. This result provides an important clue for deterministic and exact dynamic support tracking under MMV setup, in which the probabilistic compressed sensing support estimation step is replaced by the support estimate from the previous snapshots, after which the CS-MUSIC algorithm eliminates the incorrect portion of previous time point support  estimation and then add newly updated support deterministically. This update scheme guarantees the exact support tracking in noiseless case under an appropriate sampling condition. 
Other contributions of our method include that the support error does not propagate along time due to the self-correction step. Furthermore, using large system model, we can derive conditions with which the proposed algorithm correct track the time varying support even in noisy cases. We believe that with these noticeable advantages of our algorithm we may find many important applications in radar, communication as well as biomedical application.

This paper consist of following. Section \ref{sec:review} reviews the compressive MUSIC and support correction criterion for MMV setup. In Section \ref{sec:proposedMethod}, we derive our main theoretical results on sampling condition for deterministic support tracking. Numerical results are given in Section \ref{sec:results}, which is followed by conclusion in Section \ref{sec:conclusion}.


\subsection{Notations and Mathematical Preliminaries}

Throughout the paper, $\xb^i$ and $\xb_j$ correspond to the $i$-th row and the $j$-th column of matrix $X$, respectively. When $S$ is an index set, $X^S$, $A_S$ corresponds to a submatrix collecting corresponding rows of $X$ and columns of $A$, respectively. The following definitions are also used throughout the paper.
\begin{definition}\cite{Donoho20204npss}
The rows (or columns) in $\mathbb{R}^n$ are in general position if any $n$ collection of rows (or columns) are linearly independent.
\end{definition}

\begin{definition}\cite{DoEl03}
${\rm Spark}(A)$ denotes the smallest number of linearly dependent columns of a matrix $A$.
\end{definition}

\begin{definition}[Restricted Isometry Property (RIP)]
A sensing matrix $A\in\mathbb{R}^{m\times n}$ is said to have a $k$-restricted isometry property (RIP) if there exist left and right RIP constants $0<\delta^L_k, \delta^R_k<1$  such that
$$(1-\delta^L_k)\|\mathbf{x}\|^2\leq \|A\mathbf{x}\|^2\leq (1+\delta^R_k)\|\mathbf{x}\|^2$$
for all $\mathbf{x}\in\mathbb{R}^n$ such that $\|\mathbf{x}\|_0\leq k$. A single RIP constant $\delta_k = \max\{\delta^L_k, \delta^R_k\}$ is often referred to as the RIP constant.
\end{definition}

\section{MMV Compressive Sensing using Compressive MUSIC: A Review}
\label{sec:review}

Let $m$, $n$ and $r$  be a positive integers ($m<n$) that represents
the number of sensor elements, the ambient space dimension, and the
number of snapshots, respectively. Suppose that we are given a
multiple-measurement vector $B\in\mathbb{R}^{m\times r}$,
$X=[\mathbf{x}_1,\cdots,\mathbf{x}_r]\in\mathbb{R}^{n\times r}$, and
a sensing matrix $A\in \mathbb{R}^{m\times n}$.
A canonical form MMV problem \cite{Kim2010CMUSIC}
is given by the following optimization problem:
\begin{eqnarray}\label{eq:canMMV}
{\rm minimize}~~~\|X\|_0\\
{\rm subject~to}~~~B=AX \notag,
\end{eqnarray}
where $\|X\|_0=|{\rm supp}X|=k$, ${\rm supp}X=\{1\leq i\leq n :
\mathbf{x}^i\neq 0\}$, and the measurement matrix $B$ is full rank, i.e. ${\rm rank}(B)=r\leq \|X\|_0$.

Recall that every MMV problem can be converted to a canonical form MMV using a singular value decomposition and dimension reduction as described in \cite{Kim2010CMUSIC}.
Now, We can easily expect that the diversity due to the joint sparsity can improve the recovery performance over SMV compressed sensing. Indeed, Chen and Huo \cite{chen2006trs}, Feng and Bresler \cite{Feng97}  and recently Davies and Elder \cite{Davies2010Rank} showed that
$X\in\mathbb{R}^{n\times r}$  is the unique solution of $AX=B$ if and only if
\begin{equation}\label{eq:l0-bound-mmv}
\|X\|_0<  \frac{{\rm spark}(A)+{\rm rank}(B)-1}{2}
\leq {\rm spark}(A)-1  \ .
\end{equation}
Note that we can expect ${\rm rank}(B)/2$ gains over SMV thanks to the MMV diversity. Furthermore, Feng and Bresler \cite{Feng97}   showed that the noiseless $l_0$ bound in Eq.~\eqref{eq:l0-bound-mmv} is achievable using MUSIC algorithm as long as $r={\rm rank}(B)=k$. More specifically, suppose that the columns of a sensing matrix $A\in\mathbb{R}^{m\times n}$ are in general position. Then, according to \cite{Feng97,schmidt1986multiple},  for any $j\in \{1,\cdots,n\}$, $j\in {\rm supp}X$ if and only if
\begin{equation}\label{eq:musicCond}
Q^{*}\mathbf{a}_j=0,
\end{equation}
where $Q\in\mathbb{R}^{m\times (m-r)}$ consists of orthonormal columns such that $Q^{*}B=0$ so that $R(Q)^{\perp}=R(B)$, which is often called ``noise subspace''. Using the compressive sensing terminology, Eq.~\eqref{eq:musicCond} implies that the recoverable sparsity level by MUSIC (with a probability $1$ for the noiseless measurement case) is given by
\begin{equation}\label{eq:max_music}
    \|X\|_0 < m = {\rm spark}(A)-1,
\end{equation}
where the last equality comes from the definition of the ${\rm spark}$. Therefore, the $l_0$ bound \eqref{eq:l0-bound-mmv} can be achieved by MUSIC bound in \eqref{eq:max_music} when $r=k$  \cite{Feng97}.

However, for any $r<k$, the MUSIC condition \eqref{eq:musicCond} does not hold. This is a major drawback of MUSIC
compared to CS algorithms that allow perfect
reconstruction with a extremely large probability by increasing the
sensor elements $m$.  One the other hand, even thought the conventional CS algorithms for MMV such as simultaneous OMP (S-OMP), $p$-thresholding \cite{gribonval2008aac, Eldar2010aca} have good recovery performance when $r\ll k$, but they exhibit performance saturation as $r$ increases and never achieve the $l_0$ bound with finite snapshot even in noiseless case. Recently, we showed that this drawback of the existing approaches can be overcome by the following generalized MUSIC criterion \cite{Kim2010CMUSIC}.

\begin{theorem}\label{com-music} \cite{Kim2010CMUSIC}
Assume that $A\in\mathbb{R}^{m\times n}$, $X\in\mathbb{R}^{n\times
r}$, and $B\in\mathbb{R}^{m\times r}$ satisfy $AX=B$. Furthermore, we assume that
$\|X\|_0=k$ and $A$ satisfies the RIP condition with the left RIP constant $0<\delta^L_{2k-r+1}<1$.
If we are given $I_{k-r}\subset {\rm supp}X$
with $|I_{k-r}|=k-r$ and $A_{I_{k-r}}\in \mathbb{R}^{m\times(k-r)}$,
which consists of columns whose indices are in $I_{k-r}$, then for
any $j\in \{1,\cdots,n\}\setminus I_{k-r}$,
\begin{equation}\label{eq-comusicmod}
\ab_j^{*}\left[P_{R(Q)}-P_{R(P_{R(Q)}A_{I_{k-r}})}\right]\ab_j=0
\end{equation}
if and only if $j\in {\rm supp}X$.
\end{theorem}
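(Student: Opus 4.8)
The plan is to first convert the scalar identity \eqref{eq-comusicmod} into a statement about containment of $\ab_{j}$ in a subspace, and then to recognize that subspace as $R(A_{S})$ with $S:=\mathrm{supp}X$. Write $N:=R(Q)=R(B)^{\perp}$ for the noise subspace and $U:=R(A_{I_{k-r}})$. Since the columns of $P_{R(Q)}A_{I_{k-r}}$ span $P_{N}(U)\subseteq N$, the matrix $\Pi:=P_{R(Q)}-P_{R(P_{R(Q)}A_{I_{k-r}})}$ is exactly the orthogonal projector onto $N\ominus P_{N}(U)$. The first step is the identity $N\ominus P_{N}(U)=N\cap U^{\perp}$: for $v\in N$ and any $c$, $\langle v,\;P_{N}A_{I_{k-r}}c\rangle=\langle P_{N}v,\;A_{I_{k-r}}c\rangle=\langle v,\;A_{I_{k-r}}c\rangle$ because $P_{N}$ is self-adjoint and $v\in N$, so $v\perp P_{N}(U)$ iff $v\perp U$. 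Since $\Pi$ is an orthogonal projector, $\ab_{j}^{*}\Pi\ab_{j}=\|\Pi\ab_{j}\|^{2}$, which vanishes iff $\ab_{j}\perp\bigl(N\cap U^{\perp}\bigr)$, i.e., after taking orthogonal complements and using $N^{\perp}=R(B)$, iff
\[
\ab_{j}\in R(B)+R(A_{I_{k-r}}).
\]

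It therefore suffices to prove that, for $j\notin I_{k-r}$, one has $\ab_{j}\in R(B)+R(A_{I_{k-r}})$ if and only if $j\in S$. Write $S=I_{k-r}\cup J$ with the union disjoint and $|J|=r$, and decompose $B=A_{S}X^{S}=A_{I_{k-r}}X^{I_{k-r}}+A_{J}X^{J}$. From $0<\delta^{L}_{2k-r+1}<1$, any at most $2k-r+1$ columns of $A$ are linearly independent, and since $r\le k$ we have $|S\cup\{j\}|\le k+1\le 2k-r+1$; this is the only place the RIP hypothesis enters. For the ``only if'' direction, the columns of $B$ lie in $R(A_{S})$ and $I_{k-r}\subseteq S$, so $R(B)+R(A_{I_{k-r}})\subseteq R(A_{S})$; hence $\ab_{j}^{*}\Pi\ab_{j}=0$ forces $\ab_{j}\in R(A_{S})$, and if $j\notin S$ this produces a nontrivial linear dependence among the $k+1\le 2k-r+1$ columns of $A_{S\cup\{j\}}$, contradicting the RIP bound — so $j\in S$, hence $j\in J$.

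For the ``if'' direction I would prove the reverse inclusion $R(B)+R(A_{I_{k-r}})=R(A_{S})$, which immediately gives $\ab_{j}\in R(A_{S})=R(B)+R(A_{I_{k-r}})$ whenever $j\in S$. Absorbing the $A_{I_{k-r}}X^{I_{k-r}}$ term into $R(A_{I_{k-r}})$ shows $R(B)+R(A_{I_{k-r}})=R(A_{I_{k-r}})+R(A_{J}X^{J})$, and $R(A_{J}X^{J})=A_{J}\bigl(\mathrm{col}(X^{J})\bigr)$. The crux — and the step I expect to be the main obstacle — is that the $r\times r$ block $X^{J}$ has full rank $r$, so that $\mathrm{col}(X^{J})=\mathbb{R}^{r}$ and hence $R(A_{J}X^{J})=R(A_{J})$, giving $R(B)+R(A_{I_{k-r}})=R(A_{I_{k-r}})+R(A_{J})=R(A_{S})$. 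This rank fact is genuinely needed (it does not follow from $\mathrm{rank}(B)=r$ and $\mathrm{supp}X=S$ alone): it is supplied by the standing genericity assumption that the nonzero rows of $X$ are in general position in $\mathbb{R}^{r}$, so that any $r$ of them — in particular the $r$ rows indexed by $J$ — are linearly independent. With that input the two inclusions meet and the equivalence follows.
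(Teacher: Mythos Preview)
The paper does not prove Theorem~\ref{com-music}: it is quoted from \cite{Kim2010CMUSIC} as part of the review in Section~\ref{sec:review}, and the only proofs given here (Appendices~A and~B) concern Theorems~\ref{thm:sparsity-varying} and the noisy large-system result. There is therefore nothing in the present paper to compare your argument against.

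On its own merits your argument is correct and is, in fact, the standard route used in the CS-MUSIC literature: reduce $\Pi:=P_{R(Q)}-P_{R(P_{R(Q)}A_{I_{k-r}})}$ to the orthogonal projector onto $R(Q)\cap R(A_{I_{k-r}})^{\perp}$, take complements to get $\ab_{j}^{*}\Pi\ab_{j}=0\Leftrightarrow \ab_{j}\in R(B)+R(A_{I_{k-r}})$, and then identify $R(B)+R(A_{I_{k-r}})$ with $R(A_{S})$. Your use of the left RIP constant $\delta^{L}_{2k-r+1}$ is exactly right: it supplies linear independence of any $2k-r+1$ columns, which handles both the ``only if'' direction (no $\ab_{j}$ with $j\notin S$ can fall into $R(A_{S})$) and the injectivity of $A_{S}$ needed when passing from $\mathrm{rank}(B)=r$ to $\mathrm{rank}(X^{S})=r$.

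You are also right to flag the one genuine subtlety: the equality $R(B)+R(A_{I_{k-r}})=R(A_{S})$ needs $X^{J}$ to have full rank $r$ for the \emph{particular} $J=S\setminus I_{k-r}$, and this does not follow from $\mathrm{rank}(X^{S})=r$ alone. The result as stated here implicitly relies on the canonical-MMV setup with generic nonzero rows of $X$ (any $r$ of them linearly independent), which is how \cite{Kim2010CMUSIC} resolves it; without that assumption one can construct $X$ and a choice of $I_{k-r}$ for which the ``if'' direction fails. Your proof correctly isolates this as the crux and supplies the needed hypothesis.
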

In \cite{Kim2010CMUSIC}, we demonstrate that the condition $0<\delta^L_{2k-r+1}<1$ for generalized MUSIC is equivalent to $l_0$ bound \eqref{eq:l0-bound-mmv}, which implies that a computational expensive combinatorial optimization problem is now reduced to $|I_{k-r}|$ support estimation from the original $|I_{k}|$ support estimation\footnote{When $r=k$, the
condition (\ref{eq-comusicmod}) is the same as the MUSIC criterion (\ref{eq:musicCond}) and no combinatorial algorithm is necessary.}. Furthermore, by Theorem \ref{com-music}, we can develop a computationally tractable relaxation algorithm called Compressive MUSIC (CS-MUSIC) that relaxes the combinatorial optimization step of finding $I_{k-r}$ support using the conventional MMV-CS algorithms \cite{Kim2010CMUSIC}. The algorithm can be stated as following:
\begin{itemize}
   \item {\bf (Step 1: compressed sensing step)} Find $k-r$ indices of ${\rm supp}X$ by any MMV compressive sensing algorithms such as 2-thresholding or SOMP. 
 Let $I_{k-r}$ be set of   selected indices and $S=I_{k-r}$.
   \item   {\bf (Step 2: generalized MUSIC step)}   For $j\in \{1,\cdots,n\}\setminus I_{k-r}$,  calculate the quantities $\eta(j)=\ab_j^{*}[P_{R(Q)}-P_{R(P_{R(Q)}A_{I_{k-r}})}]\ab_j$ for all $j\notin
   I_{k-r}$.
  Make an ascending ordering of $\eta(j)$, $j\notin
   I_{k-r}$ and choose indices that correspond to the first $r$
   elements and put these indices into $S$.
\end{itemize}

In compressive MUSIC, we determine $k-r$ indices of ${\rm supp}X$
with CS-based algorithms such as 2-thresholding or S-OMP rather than $l_0$ optimization, where the
exact identification of $k-r$ indices is a probabilistic matter. After that process,
we recover remaining $r$ indices of ${\rm supp}X$ with a generalized
MUSIC criterion, which is given in Theorem \ref{com-music}, and this reconstruction process is
deterministic. This hybridization makes the compressive MUSIC
applicable for all ranges of $r$, outperforming all the existing methods.
Similar observation have been made independently by Lee and Bresler  \cite{Lee2010SAMUSIC} in their subspace augmented MUSIC (SA-MUSIC) algorithm.

To analyze the performance of the compressive MUSIC,
we should  find the number of measurements with which we can identify the support of $X$. Due to the reduction of uncertainty from $|I_k|$ to $|I_{k-r}|$, we can expect more relaxed sampling condition. In \cite{Kim2010CMUSIC}, we derived the sampling requirements when subspace S-OMP or 2-thresholding is used as a compressed sensing step for compressive MUSIC. The results can be summarized as following.
The number of measurements for subspace S-OMP for partial support recovery exhibits  two distinct characteristics depending on the number of the measurement vectors. First, if  the number of multiple measurement vectors $r$ is sufficiently small, then the number of samples for S-OMP is reciprocally proportional to the number of multiple measurement vectors. On the other hand, we have sufficiently large number of snapshots such that $\lim_{n\rightarrow\infty}(\log{n})/r$ is close to 0, then the number of measurements for S-OMP varies from $4k$ to $k$ according to the ratio of $r$ and $k$ so that the $\log{n}$ is not necessary. In particular,  if the number of snapshots approaches the sparsity $k$, then we can identify the indices of ${\rm supp}X$ with only $k$ measurements, which is equivalent to the required number of multiple measurement vectors for the success of conventional MUSIC. Furthermore, we demonstrated that the required SNR for the success of support recovery can be reduced and  when the asymptotic ratio of the number of snapshots and the sparsity level (that is, $\lim_{n\rightarrow\infty}r/k$) is nonzero in the large system limit, only finite SNR is required, which is significant improvement over SMV-CS.

In the original form of CS-MUSIC, the performance is, however,  very dependent on the selection of $k-r$ correct indices of the support of $X$. In practice, even though the consecutive $k-r$ steps of S-OMP may not be correct, there are chances that among the estimates of $k$-sparse solution, part of the supports could be correct. Hence, if we have a mean to identify $k-r$ correct support in any order out of any $k$-sparse, then we can expect that the performance of the compressive MUSIC will be improved. Of course, when ${k \choose k-r}$ is small, we may apply the exhaustive search, but if both $k-r$ and $r$ are not small, then the exhaustive search is hard to apply so that we have to find some alternative method to identify the correct indices from the estimate of ${\rm supp}X$.
Indeed, the following support selection criterion can address the problem \cite{Kim2011CSMUSIC}.
\begin{theorem}\cite{Kim2011CSMUSIC}\label{thm-sfcriterion}
Assume that we have a canonical MMV model $AX=B$ where $A\in\mathbb{R}^{m\times n}$, $X\in\mathbb{R}^{n\times r}$, $\|X\|_0=k$ and $r<k<m<n$. If there is an index set $I_k\subset \{1,\cdots,n\}$ such that $|I_k|=\min\{k, {\rm spark}(A)-r\}$ and $|I_k\cap {\rm supp}X|\geq k-r+1$, then for any $j\in I_k$,
$j\in {\rm supp}X$ if and only if
\begin{equation}\label{sf-criterion}
P_{Q_{k,j}}\ab_j={\bf 0},
\end{equation}
where $Q_{k,j}$ is the orthogonal complement for $R([B~A_{I_k\setminus \{j\}}])$, $A_{I_k\setminus \{j\}}$ consists of columns of $A$ whose index belongs to $I_k\setminus \{j\}$ and $P_{R([B~~A_{I_k\setminus \{j\}}])}^{\perp}$ is the orthogonal projection on $R([B~~A_{I_k\setminus \{j\}}])^{\perp}$.
In particular, if the columns of $A$ are in general position, then we can take index set $I_k$ with $|I_k|=\min\{k,m-r+1\}$. Also, if $A$ has an RIP condition with $0<\delta_{2k}<1$, then we can take $|I_k|=k$ since $r\leq k$. 
\end{theorem}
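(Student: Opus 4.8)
\
The plan is to reduce \eqref{sf-criterion} to a range--membership statement and then handle the two implications separately. Since $P_{Q_{k,j}}$ is the orthogonal projection onto $R([B~A_{I_k\setminus\{j\}}])^{\perp}$, the equality $P_{Q_{k,j}}\ab_j={\bf 0}$ holds if and only if $\ab_j\in R([B~A_{I_k\setminus\{j\}}])$; so it suffices to prove that, for $j\in I_k$, one has $\ab_j\in R([B~A_{I_k\setminus\{j\}}])$ exactly when $j\in{\rm supp}X$. Put $S={\rm supp}X$, so $|S|=k$ and $R(B)=R(A_SX^S)\subseteq R(A_S)$. First I would establish the ``if'' direction in contrapositive form ($j\notin S\Rightarrow\ab_j\notin R([B~A_{I_k\setminus\{j\}}])$) by a spark count, and then the ``only if'' direction ($j\in S\Rightarrow\ab_j\in R([B~A_{I_k\setminus\{j\}}])$) by reducing it to the generalized MUSIC criterion of Theorem~\ref{com-music}.

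\noindent\emph{Ambient independence and the ``if'' direction.}\
First I would check $|S\cup I_k|<{\rm spark}(A)$ in both regimes of $|I_k|=\min\{k,{\rm spark}(A)-r\}$: if $|I_k|=k$ then ${\rm spark}(A)\geq k+r$ and, since $|S\cap I_k|\geq k-r+1$, $|S\cup I_k|=k+|I_k\setminus S|\leq k+(r-1)<k+r$; if $|I_k|={\rm spark}(A)-r<k$ then $|S\cup I_k|\leq k+\bigl(|I_k|-(k-r+1)\bigr)=|I_k|+r-1<{\rm spark}(A)$. Hence $A_{S\cup I_k}$ (and in particular $A_S$) has full column rank. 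Now if $j\in I_k\setminus S$, then $R([B~A_{I_k\setminus\{j\}}])\subseteq R(A_S)+\mathrm{span}\{\ab_i:i\in I_k\setminus\{j\}\}=\mathrm{span}\{\ab_i:i\in(S\cup I_k)\setminus\{j\}\}$, and by the full column rank of $A_{S\cup I_k}$ this subspace does not contain $\ab_j$, so $P_{Q_{k,j}}\ab_j\neq{\bf 0}$ --- the desired contrapositive.

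\noindent\emph{The ``only if'' direction.}\
Let $j\in S\cap I_k$. Because $|S\cap I_k|\geq k-r+1$, I can pick $I_{k-r}\subseteq(S\cap I_k)\setminus\{j\}$ with $|I_{k-r}|=k-r$; it lies in ${\rm supp}X$ and avoids $j$, so Theorem~\ref{com-music} applies and its criterion \eqref{eq-comusicmod}---which, by the standard projection identity $P_{R(Q)}-P_{R(P_{R(Q)}A_{I_{k-r}})}=I-P_{R([B~A_{I_{k-r}}])}$, is equivalent to $\ab_j\in R([B~A_{I_{k-r}}])$---yields $\ab_j\in R([B~A_{I_{k-r}}])$. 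Since $I_{k-r}\subseteq I_k\setminus\{j\}$, monotonicity gives $R([B~A_{I_{k-r}}])\subseteq R([B~A_{I_k\setminus\{j\}}])$, hence $\ab_j\in R([B~A_{I_k\setminus\{j\}}])$, i.e.\ $P_{Q_{k,j}}\ab_j={\bf 0}$. When $A$ is assumed only to be in general position, Theorem~\ref{com-music}'s RIP hypothesis need not hold, so there I would prove $\ab_j\in R([B~A_{I_{k-r}}])$ directly: through the injective map $A_S\colon\mathbb{R}^S\to R(A_S)$ this becomes $\eb_j\in R(X^S)+\mathrm{span}\{\eb_i:i\in I_{k-r}\}$, and projecting out the $k-r$ coordinate directions indexed by $I_{k-r}$ reduces it to the statement that the image of $\eb_j$ lies in $R(X^{S\setminus I_{k-r}})\subseteq\mathbb{R}^{S\setminus I_{k-r}}\cong\mathbb{R}^r$, which is true because the $r\times r$ matrix $X^{S\setminus I_{k-r}}$ is nonsingular.

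\noindent\emph{Main obstacle.}\
The hard part is precisely this nonsingularity in the ``only if'' direction: after projecting out a correct partial support one is left with a square $r\times r$ block of $X^S$, and the criterion holds only if that block is invertible --- equivalently, only if the source matrix is generic in the sense that any $r$ of its nonzero rows are linearly independent, the same condition underlying the exact noiseless MUSIC recovery \eqref{eq:max_music}. If, say, two nonzero rows of $X^S$ coincided, the criterion would fail for a suitable $I_k$, so the proof must be anchored on this genericity. Everything else is routine bookkeeping: the spark count $|S\cup I_k|<{\rm spark}(A)$ in the two regimes of $|I_k|=\min\{k,{\rm spark}(A)-r\}$, and the two closing assertions of the statement --- that general position of the columns of $A$ gives ${\rm spark}(A)=m+1$ (hence $|I_k|=\min\{k,m-r+1\}$), and that $0<\delta_{2k}<1$ forces ${\rm spark}(A)>2k>k+r$ (hence $|I_k|=k$ and $\delta^L_{2k-r+1}<1$, so that Theorem~\ref{com-music} is available throughout).
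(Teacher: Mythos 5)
You should know at the outset that this paper never proves Theorem~\ref{thm-sfcriterion}: it is imported from \cite{Kim2011CSMUSIC}, and the appendices only prove Theorem~\ref{thm:sparsity-varying} and the noisy-case theorem. So your attempt can only be judged on its own terms. The parts that work: the reduction of \eqref{sf-criterion} to the range condition $\ab_j\in R([B~A_{I_k\setminus\{j\}}])$; the contrapositive ``if'' direction, whose two-regime count $|{\rm supp}X\cup I_k|<{\rm spark}(A)$ is correct and, combined with $R(B)\subseteq R(A_{{\rm supp}X})$ and the resulting full column rank of $A_{{\rm supp}X\cup I_k}$, does exclude $\ab_j$ for $j\in I_k\setminus{\rm supp}X$; and the two closing remarks (general position gives ${\rm spark}(A)=m+1$, and $\delta_{2k}<1$ gives $|I_k|=k$).

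The gap is the forward direction, and you flagged it yourself without closing it. For $j\in I_k\cap{\rm supp}X$, pulling back through the injective $A_{{\rm supp}X\cup I_k}$ shows that $\ab_j\in R([B~A_{I_k\setminus\{j\}}])$ holds if and only if $\xb^j\notin{\rm span}\{\xb^i:\ i\in{\rm supp}X\setminus I_k\}$, i.e.\ exactly the row-genericity your ``main obstacle'' paragraph invokes; the stated hypotheses ($\|X\|_0=k$, ${\rm rank}(B)=r$, the cardinality conditions on $I_k$) do not supply it, and your proposal neither assumes it explicitly nor derives it, so the implication $j\in{\rm supp}X\Rightarrow P_{Q_{k,j}}\ab_j=\mathbf{0}$ is not established. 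Your suspicion that no argument can avoid this is correct: take $r=2$, $k=3$, ${\rm supp}X=\{1,2,3\}$ with $\xb^3=2\xb^1\neq\mathbf{0}$ and $\xb^1,\xb^2$ independent, columns of $A$ in general position with $m=5$, and $I_k=\{1,2,4\}$; then ${\rm rank}(B)=r$ and all cardinality hypotheses hold, yet $R([B~A_{\{2,4\}}])={\rm span}\{\ab_1+2\ab_3,\ab_2,\ab_4\}$ does not contain $\ab_1$, so the criterion fails at $j=1\in{\rm supp}X$. Hence the missing hypothesis (e.g.\ any $r$ nonzero rows of $X$ linearly independent, which holds generically and is what the cited source's setting provides) must be added, after which your direct range argument carries both regimes of $|I_k|$. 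Two further points: routing the ``only if'' direction through Theorem~\ref{com-music} is not licensed by the hypotheses here, since $|I_k|=\min\{k,{\rm spark}(A)-r\}$ only forces ${\rm spark}(A)\geq k+1$ (or $k+r$ when $|I_k|=k$), not ${\rm spark}(A)>2k-r+1$, i.e.\ not $\delta^L_{2k-r+1}<1$; and Theorem~\ref{com-music} as transcribed hides the same row-genericity issue (the identity $R([B~A_{I_{k-r}}])=R(A_{{\rm supp}X})$ needs the $r\times r$ block $X^{{\rm supp}X\setminus I_{k-r}}$ to be nonsingular), so leaning on it as a black box would not remove the gap anyway.
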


Theorem~\ref{thm-sfcriterion} informs us that we only require the success of partial support recover out of $k$-sparse estimate, rather than $k-r$ consecutive correct CS step \cite{Kim2010CMUSIC}.
Accordingly, the compressive MUSIC with optimized partial  support is then performed by following procedure.

\begin{itemize}
\item  {\bf [Step 1: compressed sensing]} Estimate $k$ indices of ${\rm supp}X$ by any MMV compressive sensing algorithm.
Let $I_{k}$ be the set of indices which are taken in step 1.
\item  {\bf [Step 2: support deletion]}  For $j\in I_{k}$, calculate the quantities $\zeta(j)=
\|P_{Q_{k,j}}\ab_j\|^2.$
Make an ascending ordering of $\zeta(j)$, $j\in I_k$ and choose indices that corresponds the first $k-r$ elements and put these indices into $S$ and remove the remaining ones.
\item  {\bf [Step 3: support addition]} For $j\in \{1,\cdots,n\}\setminus S$, calculate the quantities
$$\eta(j)=\ab_j^{*}[P_{R(Q)}-P_{R(P_{R(Q)}A_{I_{k-r}})}]\ab_j.$$
 Make an asending ordering of $\eta(j)$, $j\notin S$ and choose indices that correspond to the first $r$ elements and put these indices into $S$.
\end{itemize}
The step $1$ in the above algorithm need not to be  greedy so that we can also apply the convex optimization algorithm such as $l_{2,1}$ minimization \cite{malioutov2005ssr} or belief propagation \cite{KCJBY2011}.

\section{Deterministic Support Tracking using Compressive MUSIC}
\label{sec:proposedMethod}

\subsection{Noiseless Cases}

In this section, we will show how the compressive MUSIC with optimized partial support can be used for dynamic support tracking,  whose joint support ${\rm supp}X(t)$ changes slowly along time as illustrated in Fig.\ref{fig:timevaryingMMV}.
First, we define a canonical form of dynamic MMV problem.

\begin{definition}\label{def:can}
A canonical form of noiseless dynamic MMV problem is given by set of MMV problem with time varying $k$-sparse vectors $X(t)\in \mathbb{R}^{n \times r}$ that satisfies $Y(t)=AX(t)$ as described in following formulation:
\begin{eqnarray}\label{eq:timevaryingMMV}
\min\limits_{X(t)}~\|X(t)\|_0,~~{\rm subject~to}~~B(t)=AX(t),~~t=0,1,\cdots,
\end{eqnarray}
where ${\rm supp}X(t)=\{1\leq i\leq n : \mathbf{x}(t)^i\neq 0\}$ and   $|{\rm supp}X(t)|=k(t)$, the measurement matrix $B(t)$ is full rank,  i.e. ${\rm rank}(B(t))\leq k(t)$. Here we assume that ${\rm rank}(B(t))$ is constant so that we let $r:={\rm rank}(B(t))$. 
\end{definition}

Note that the canonical form MMV has the additional
constraints that the measurement matrix is full rank and  ${\rm rank}(B(t))=r\leq k(t)$. This is not
problematic  since every dynamic MMV problem  can be converted into
a canonical form using the following dimension reduction similar to  \cite{Kim2010CMUSIC} .
\begin{itemize}
   \item Suppose we are given the following linear sensor observations: $B(t)=AX(t)$ where  $A\in\mathbb{R}^{m\times n}$ and
   $X\in\mathbb{R}^{n\times l}$ satisfies $\|X(t)\|_0=k(t)$.
   \item Compute the SVD as $B(t)=UD_rV^{*}$, where $D_r$ is an $r\times r$ diagonal matrix, $V\in\mathbb{C}^{l\times r}$ consists of right singular vectors,
    and $r={\rm rank}(B)$, respectively.
   \item Reduce the dimension as $B_{SV}(t)=B(t)V$ and $X_{SV}(t)=X(t)V$.
   \item The resulting canonical form MMV becomes $B_{SV}(t)=AX_{SV}(t)$.
\end{itemize}
We can easily show that  ${\rm rank}(B_{SV})=r\leq k(t)$  and full rank and
 the sparsity $k(t):=\|X(t)\|_0 = \|X_{SV}(t)\|_0$ with probability 1.
Therefore, without loss of generality,  the canonical form of dynamic MMV in Definition~\ref{def:can}
is assumed throughout the paper.

\begin{figure}
   \begin{center}
   \begin{tabular}{c}
   \includegraphics[width=11cm, height=3.5cm]{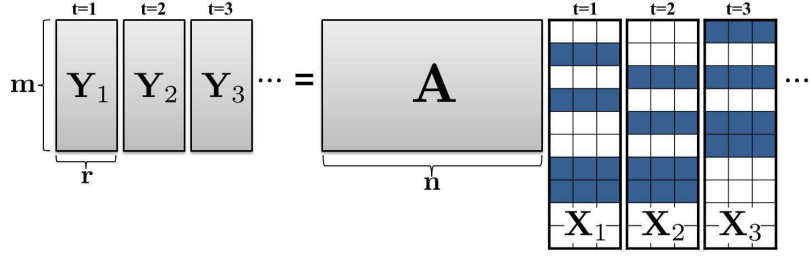}
   \end{tabular}
   \end{center}
   \caption[example]
   { \label{fig:timevaryingMMV}
MMV problem for slowly time varying sparsity pattern.}
\end{figure}

For such dynamic support tracking,  we can apply our CS-MUSIC algorithm. However,
 if the number of snapshots is not sufficient, the amount of support estimation that need to be done by CS step is significantly larger than those recovered by the deterministic generalized MUSIC step. Since CS step allows the support recovery in a probabilistic sense,  it is more prone to error;  so  we are interested in finding a deterministic algorithm that significantly outperform the existing one.  The following Theorem \ref{thm:sparsity-varying} shows that if we have a  correct estimation for the initial support $I(0)$ of $X(0)$ and the support changes are sufficiently small and the sparsity $k(t)$ is fixed for all time point, then we can recursively identify the support of time-varying input signals in a {\em deterministic} manner.


\begin{theorem}\label{thm:sparsity-varying}
Suppose a noiseless canonical form of dynamic MMV problem  satisfies
\begin{equation}\label{slowly-time-varying}
|{\rm supp}X(t)\setminus {\rm supp}X(t-1)|\leq r-1 ,
\end{equation}
for all $t=1,2, \cdots$. Furthermore we assume that $r\leq k(t)\leq k_{\max}$ for a positive integer $k_{\max}$ and $0\leq \delta_{2k_{\max}}(A)<1$.
Then, if we have a correct initial support estimation for $X(0)$, then we can identify the correct support for all $t>0$ by applying the following procedure recursively: 
\begin{itemize}
\item {\bf [Initial support estimation]} Let $I(t-1)$ be the support estimation of $X(t-1)$;
\item {\bf [Support deletion]} Find an index set $I(t)^{a}\subset I(t-1)$ such that 
$I(t)^{a}:=\{j\in I(t-1):\ab_j^{*}P_{Q(t)_{k,j}}\ab_j=0\},$
where $Q(t)_{k.j}$ is the orthogonal complement for \newline  $R[B(t),A_{I(t-1)\setminus \{j\}}]$;
\item {\bf [Support addition]} Find an index set $I(t)$ such that \\
$I(t)=\{j:\ab_j^*[P_{R(Q(t))}-P_{R(P_{R(Q(t))A_{I(t)^a}})}]\ab_j=0\},$
where $Q(t)\in\mathbb{R}^{m\times (m-r)}$ consists of orthonormal columns such that $Q(t)^{*}B(t)=0$;
\item Set $\hat{k}(t):=|I(t)|$ be the sparsity estimate for $X(t)$ and $I(t)$ be the support estimate for $X(t)$.
\end{itemize}
\end{theorem}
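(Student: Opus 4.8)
\emph{Plan.} I would prove Theorem~\ref{thm:sparsity-varying} by induction on $t$, letting Theorem~\ref{thm-sfcriterion} carry the support-deletion step and Theorem~\ref{com-music} (mildly strengthened) carry the support-addition step; the real work is checking that the hypotheses of those two theorems hold at every time. Write $k$ for the sparsity, which I take to be constant ($k(t)\equiv k$, $r\le k\le k_{\max}$), and assume $r<k$ (if $r=k$ the deletion step is unnecessary and ordinary MUSIC recovers ${\rm supp}X(t)$ outright). The base case $t=0$ is the hypothesis $I(0)={\rm supp}X(0)$. For the inductive step I would assume $I(t-1)={\rm supp}X(t-1)$, so $|I(t-1)|=k$, put $u:=|{\rm supp}X(t)\setminus{\rm supp}X(t-1)|$, and record that \eqref{slowly-time-varying} gives $u\le r-1$, hence $|{\rm supp}X(t-1)\cap{\rm supp}X(t)|=k-u\ge k-r+1$.

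\emph{Support deletion.} I would apply Theorem~\ref{thm-sfcriterion} to the static canonical model $AX(t)=B(t)$ with sparsity $k$, rank $r$, and candidate set $I_k:=I(t-1)$. Its size hypothesis $|I_k|=\min\{k,{\rm spark}(A)-r\}$ holds because $\delta_{2k_{\max}}(A)<1$ forces ${\rm spark}(A)\ge 2k_{\max}+1$, so ${\rm spark}(A)-r\ge 2k-r\ge k$ and the minimum equals $k=|I(t-1)|$; its partial-support hypothesis $|I_k\cap{\rm supp}X(t)|\ge k-r+1$ is the inequality just recorded. Theorem~\ref{thm-sfcriterion} then says that for $j\in I(t-1)$ one has $\ab_j^{*}P_{Q(t)_{k,j}}\ab_j=0$ iff $j\in{\rm supp}X(t)$, so the deletion step returns precisely $I(t)^a={\rm supp}X(t-1)\cap{\rm supp}X(t)\subseteq{\rm supp}X(t)$, of size $k-u\in[k-r+1,k]$. (In the exact noiseless regime ``keep all $j$ with $\zeta(j)=0$'' and ``sort $\zeta$, keep the first $k-u$'' coincide, so the step uses no prior knowledge of $u$.)

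\emph{Support addition.} Now $I(t)^a\subseteq{\rm supp}X(t)$ with $k-r\le|I(t)^a|\le k$, and $u$ indices are still missing. If $|I(t)^a|=k-r$ this is literally Theorem~\ref{com-music} with $I_{k-r}=I(t)^a$; otherwise I would rerun its proof to see that the same criterion still works: $\{P_{R(Q(t))}\ab_i:i\in I(t)^a\}$ spans the $(k-r)$-dimensional space $P_{R(Q(t))}\big({\rm span}\{\ab_l:l\in{\rm supp}X(t)\}\big)$, so every $j\in{\rm supp}X(t)$ passes the test, while for $j\notin{\rm supp}X(t)$ the relation $P_{R(Q(t))}\ab_j\in{\rm span}\{P_{R(Q(t))}\ab_i:i\in I(t)^a\}$ would force $\ab_j\in{\rm span}\{\ab_l:l\in{\rm supp}X(t)\}$, a dependence among at most $k+1$ columns of $A$ that is ruled out by $\delta_{2k_{\max}}(A)<1$. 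The RIP condition $0<\delta^L_{2k-r+1}(A)<1$ that Theorem~\ref{com-music} needs also follows from $\delta_{2k_{\max}}(A)<1$, since $2k-r+1\le 2k_{\max}$. Hence the addition step returns exactly $I(t)={\rm supp}X(t)$ and $\hat k(t)=|I(t)|=k$, which closes the induction. Since the recovery at each time is \emph{exact}, the recursion at time $t+1$ inherits only the correct $I(t)$, so support errors cannot accumulate.

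\emph{Where the difficulty lies.} The non-routine part is the simultaneous bookkeeping --- verifying that \eqref{slowly-time-varying} together with the single RIP bound $\delta_{2k_{\max}}(A)<1$ yields the size/spark and partial-support hypotheses of Theorem~\ref{thm-sfcriterion} and the RIP hypothesis of Theorem~\ref{com-music} at once --- and, more subtly, confirming that the deletion step is allowed to return more than $k-r$ correct indices without breaking the generalized-MUSIC characterization used in the addition step. If one instead lets $k(t)$ range over $[r,k_{\max}]$, the same scheme should still work, but the inequalities above must be redone with $k(t-1)\ne k(t)$, and that is where a slightly higher RIP order may be required. Everything else is immediate from Theorems~\ref{com-music} and~\ref{thm-sfcriterion}.
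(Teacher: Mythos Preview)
Your proposal is correct and follows essentially the same route as the paper: induction on $t$, Theorem~\ref{thm-sfcriterion} for the deletion step, Theorem~\ref{com-music} for the addition step, with the slowly-varying hypothesis \eqref{slowly-time-varying} supplying the partial-support overlap $|I(t-1)\cap{\rm supp}X(t)|\ge k(t)-r+1$ and the single RIP bound $\delta_{2k_{\max}}(A)<1$ supplying all spark/RIP hypotheses.

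The only noteworthy difference is how the two arguments dispose of the case $|I(t)^a|>k-r$. You rerun the proof of Theorem~\ref{com-music} directly, arguing that $\{P_{R(Q(t))}\ab_i:i\in I(t)^a\}$ still spans the same $(k-r)$-dimensional subspace. The paper instead selects any subset $I(t,r)\subset I(t)^a$ of size exactly $k(t)-r$, observes $R([B(t)\ A_{I(t)^a}])=R([B(t)\ A_{I(t,r)}])=R(A_{{\rm supp}X(t)})$, and then invokes Theorem~\ref{com-music} verbatim with $I_{k-r}=I(t,r)$. Both arguments are short and equivalent; the paper's has the slight advantage of treating varying $k(t)\in[r,k_{\max}]$ without extra bookkeeping (which you correctly flag as requiring a second pass on the inequalities in your constant-$k$ write-up).
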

\begin{proof}
See Appendix~A.
\end{proof}
In Theorem \ref{thm:sparsity-varying}, we assume the RIP condition $0\leq \delta_{2k_{\max}}^L(A)<1$, instead of $0\leq \delta_{2k_{\max}-r+1}^L(A)<1$. If we assuming the RIP condition $0\leq \delta_{2k_{\max}-r+1}^L(A)<1$, when $r>1+k_{\max}-k(t)/2$, we may have $|I_k|<k(t)$. However, we can modify the support deletion procedure in Theorem \ref{thm:sparsity-varying} as the following, under the condition $|{\rm supp}X(t)\setminus {\rm supp}X(t-1)|\leq k_{\max}/2$.
\begin{itemize}
   \item {\bf [Support deletion]} Find an index set $I(t)^{a}\subset I(t-1)$ such that 
$I(t)^{a}:=\{j\in I(t-1):\ab_j^{*}P_{Q(t)_{k,j}}\ab_j=0\},$
where $Q(t)_{k.j}$ is the orthogonal complement for $R[\tilde{B}(t),A_{I(t-1)\setminus \{j\}}]$  and $\tilde{B}(t)$ consists of $1+[\frac{k_{\max}}{2}]$ columns of $B(t)$.
\end{itemize}

\subsection{Noisy Cases}

 In practice, the measurements are noisy, so the theory we derived for noiseless measurement should be modified. 
 In the noisy case, when the sparsity are known {\it a priori} and does not change along time, we can apply the following procedure.
\begin{itemize}
    \item Let $t=0$ and let $I(0)$ be the support estimation of $X(0)$.
    \item  For all $t=1,2,\cdots$, do
        \begin{itemize}
            \item Let $I(t)=\emptyset.$
            \item For all $j\in I(t-1)$, calculate the quantities $\zeta(j)=\|P_{Q(t)_{j,k}}\ab_j\|^2$.
            \item Make an ascending ordering of $\zeta(j)$ and choose indices that correspond to the first $k-r$ elements and put these indices into $I(t)$.
            \item For $j\in \{1,\cdots,n\}\setminus I(t)$, calculate the quantities $\eta(j)=\ab_j^*\left[P_{R(Q(t))}-P_{R(P_{R(Q(t))A_{I(t)}})}\right]\ab_j.$
            \item Make an ascending ordering of $\eta(j)$, $j\notin I(t)$ and choose indices that correspond to the first $r$ indices and add these indices to $I(t)$.
            \item $I(t)$ is the estimation of ${\rm supp}X(t)$ and let $t=t+1$.
        \end{itemize}
\end{itemize}

However,  if the sparsity changes along time, in the noisy cases, some of the steps in Theorem \ref{thm:sparsity-varying} should be modified as follows:
\begin{itemize}
\item    {\bf [Support deletion]}  Set $\epsilon_1>0$ and find an index set $I(t)^a$ such that 
$I(t)^a=\{j\in I(t-1):\ab_j^{*}P_{Q(t)_{k,j}}\ab_j<\epsilon_1\}$
where $Q(t)_{k,j}$ is the orthogonal complement for $R[Y(t)~A_{I_1(t)\setminus\{j\}}]$, where $I_1(t) \subset I(t-1)$ such that $${\rm nrank}[Y(t)~A_{I_1(t)}]={\rm nrank}[Y(t)~A_{I(t-1)}]=r+|I_1(t)|,$$ where ${\rm nrank}(A)$ denotes the numerical rank of $A$.
\item  {\bf [Support addition]}   Set $\epsilon_2>0$ and find an index set $I(t)^b$ such that 
$$I(t)^b=\{j\notin I(t)^a:\ab_j^{*}P_{R([Y(t)~A_{I_2(t)}])^{\perp}}\ab_j<\epsilon_2\},$$ where an index set $I_2(t)\subset I(t)^a$ such that $${\rm nrank}[Y(t)~A_{I_2(t)}]={\rm nrank}[Y(t)~A_{I(t)^a}]=r+|I_2(t)|.$$
\end{itemize}
 In this section, we
derive sufficient conditions for  the threshold values and signal to noise ratio
 that
guarantee the correct identification of time varying support.  
For CS-MUSIC  \cite{Kim2010CMUSIC}, we derived an expression of SNR and the 
minimum number of sensor elements. 
Even though these derivation is based on a large system model with
a Gaussian sensing matrix, it has provided very useful insight. Therefore, we employed a
large system model to derive a sufficient condition for  the success of proposed algorithm.

\begin{definition}\label{def:d-mmv}
A large system noisy canonical form of dynamic MMV  is defined as an estimation
problem of $k(t)$-sparse vectors $X(t)\in\mathbb{R}^{n\times r}$ that
shares a common sparsity pattern through multiple noisy snapshots
$Y(t)=AX(t)+N(t)$ using the following formulation:
\begin{eqnarray}\label{mmv-thres}
{\rm minimize}~~~\|X(t)\|_0\\
{\rm subject~to}~~~Y(t)=AX(t)+N(t), \notag
\end{eqnarray}
where $A\in\mathbb{R}^{m\times n}$ is a random matrix with i.i.d.
$\mathcal{N}(0,1/m)$ entries, $N
=[\nb_1,\cdots,\nb_r]\in\mathbb{R}^{m\times r}$ is  an additive
noise matrix, $m\rightarrow\infty$, $k\rightarrow\infty$ as $n\rightarrow\infty$ and ${\rm rank}(AX(t))=r(t)\leq k(t)=\|X(t)\|_0$. \ Here, we
assume that $\rho:=\lim_{n\rightarrow\infty}m/n>0$ and
$\gamma=\lim_{n\rightarrow\infty}k_{\max}/m> 0$, $\alpha:=\lim_{n\rightarrow\infty}r/k_{\max}\geq 0$ exist and $\alpha\leq 1-\epsilon$ for some $0<\epsilon<1$.
\end{definition}

Under the large system model,  we have the following theorem. 

\begin{theorem}
Consider the large system model dynamic MMV in Definition~\ref{def:d-mmv}. Suppose a minimum {\sf SNR} satisfies 
\begin{equation}\label{snr-supp-selection}
{\sf SNR}_{\min}(Y(t)):=\frac{\sigma_{\min}(B(t))}{\|N\|}> 1+\frac{4(\kappa(B(t))+1)}{1-\gamma(1+\alpha)},
\end{equation}
where $\sigma_{\min}(B(t))$ is the minimum singular value for $B(t)$, $\|N\|$ is the spectral norm of $N\in\mathbb{R}^{m\times r}$ and $B(t)$ is the noiseless measurements, and $\alpha = \lim_{n\rightarrow\infty}r/k_{\max}$, $\gamma=\lim_{n\rightarrow\infty}k_{\max}/m$. Then,
 for the noisy canonical form dynamic MMV problem for slowly time varying pattern that satisfies Eq.~\eqref{slowly-time-varying}, the threshold values for support deletion and addition  criterion to the correct partial support for $X(t)$ are given by 
 \begin{eqnarray}
\epsilon_1:=(1-\gamma(1+\alpha))/2, & \quad \epsilon_2:=(1-\gamma)/2.
\end{eqnarray}
\end{theorem}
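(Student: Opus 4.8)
The plan is an induction on $t$: granting a correct initial estimate $I(0)={\rm supp}X(0)$ and assuming $I(t-1)={\rm supp}X(t-1)$, I show that with the stated thresholds the support–deletion step returns exactly $I(t)^{a}=I(t-1)\cap{\rm supp}X(t)$ and the support–addition step then returns exactly ${\rm supp}X(t)\setminus I(t)^{a}$, so that $I(t)={\rm supp}X(t)$. For each step the claim splits into two one–sided estimates: a ``true'' index $j\in{\rm supp}X(t)$ must keep the relevant quadratic form \emph{below} the threshold, while a ``false'' index $j\notin{\rm supp}X(t)$ must keep it \emph{above} the threshold. In the noiseless limit the true–index values are identically $0$: the slow–variation bound \eqref{slowly-time-varying} gives $|I(t-1)\cap{\rm supp}X(t)|=k(t)-|{\rm supp}X(t)\setminus{\rm supp}X(t-1)|\ge k(t)-r+1$, so the hypotheses of Theorems~\ref{thm-sfcriterion} and \ref{com-music} are met (the general–position / Spark requirement being automatic a.s.\ for a Gaussian $A$, in the operative regime $\gamma(1+\alpha)<1$), and those theorems make the noiseless deletion and addition criteria exact. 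Everything therefore reduces to (i) a large–system lower bound on the noiseless false–index values and (ii) a noise–perturbation bound showing that noise moves every value by strictly less than the resulting margin.

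For (i) I would use Gaussian concentration for $\ab_j\sim\mathcal N(0,I/m)$: when $j\notin{\rm supp}X(t)$, $\ab_j$ is independent of the subspace $V$ entering the criterion, so $\ab_j^{*}P_{R(V)^{\perp}}\ab_j\to\dim R(V)^{\perp}/m$ while $\|\ab_j\|^{2}\to1$, uniformly over the polynomially many relevant index sets by a RIP–type union bound (valid since $m=\rho n$ dominates $k_{\max}\log n$). It remains to count dimensions. Using $R(B(t))\subseteq R(A_{{\rm supp}X(t)})$, the induction hypothesis, and \eqref{slowly-time-varying}, the deletion subspace obeys $\dim R[B(t),A_{I_1(t)\setminus\{j\}}]\le|{\rm supp}X(t)\cup{\rm supp}X(t-1)|\le k_{\max}+r-1$, so the noiseless false–index value is at least $1-(k_{\max}+r)/m\to 1-\gamma(1+\alpha)=2\epsilon_1$; for the addition step, $I_2(t)\subseteq I(t)^{a}\subseteq{\rm supp}X(t)$ forces $R[B(t),A_{I_2(t)}]\subseteq R(A_{{\rm supp}X(t)})$, hence dimension $\le k(t)\le k_{\max}$ and false–index value at least $1-\gamma=2\epsilon_2$. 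Thus in the noiseless large–system limit the true indices sit at $0$, the false ones above $2\epsilon_1$ (resp.\ $2\epsilon_2$), and each threshold lies exactly halfway.

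For (ii), write $Y(t)=B(t)+N$ and $[Y(t),A_{I}]=[B(t),A_{I}]+[N,0]$. A Wedin / Davis--Kahan $\sin\Theta$ estimate bounds $\|P_{R(Y(t))}-P_{R(B(t))}\|$ and $\|P_{R[Y(t),A_{I}]}-P_{R[B(t),A_{I}]}\|$ by $\|N\|$ divided by the smallest nonzero singular value of the corresponding noiseless matrix; the numerical–rank conditions built into the definitions of $I_1(t)$ and $I_2(t)$ are precisely what keep those combined matrices numerically full rank (hence their gaps bounded below), and the SNR hypothesis preserves these numerical ranks under noise. The condition number $\kappa(B(t))$ enters because the noise perturbs only the $B$–block whereas $R(B(t))$ may make a small angle with $R(A_{I})$ inside the $k(t)$–dimensional $R(A_{{\rm supp}X(t)})$; re–expressing the gap of $[B(t),A_{I}]$ through $\sigma_{\min}(B(t))$ therefore costs a factor proportional to $\kappa(B(t))$. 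Propagating these subspace bounds into the quadratic forms via $|\ab_j^{*}P_1\ab_j-\ab_j^{*}P_2\ab_j|\le\|P_1-P_2\|\,\|\ab_j\|^{2}$, and using the identity $P_{R(Q(t))}-P_{R(P_{R(Q(t))}A_{I})}=P_{R[B(t),A_{I}]^{\perp}}$ together with its noisy analogue, one obtains a uniform noise perturbation of size $\dfrac{2(\kappa(B(t))+1)\,\|N\|}{\sigma_{\min}(B(t))-\|N\|}$, the ``$+1$'' in the numerator coefficient absorbing the $\|N\|$ slack in the denominator and the ``$2$'' coming from the $\sin\Theta$–to–projection conversion.

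Assembling the two sides: correct deletion needs the perturbation below $\epsilon_1$ on the true side and keeps the false–side values (starting at $\ge 2\epsilon_1$) above $\epsilon_1$ — both amount to requiring perturbation $<\epsilon_1=(1-\gamma(1+\alpha))/2$; correct addition requires perturbation $<\epsilon_2=(1-\gamma)/2$, which is automatic since $\epsilon_2\ge\epsilon_1$. Imposing $\dfrac{2(\kappa(B(t))+1)\,\|N\|}{\sigma_{\min}(B(t))-\|N\|}<\dfrac{1-\gamma(1+\alpha)}{2}$ and clearing denominators yields exactly ${\sf SNR}_{\min}(Y(t))=\sigma_{\min}(B(t))/\|N\|>1+\dfrac{4(\kappa(B(t))+1)}{1-\gamma(1+\alpha)}$, closing the induction. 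I expect the main obstacle to be step (ii) — specifically, lower–bounding the smallest singular value of the combined matrix $[B(t),A_{I}]$ (equivalently the smallest principal angle between $R(B(t))$ and $R(A_{I})$, both sitting inside $R(A_{{\rm supp}X(t)})$) uniformly over the index sets that arise, and showing that the numerical–rank–preserving choices $I_1(t),I_2(t)$ together with the conditioning of $B(t)$ and the near–isometry of a Gaussian $A$ force this quantity to be at least a fixed fraction of $\sigma_{\min}(B(t))/\kappa(B(t))$; the rest is concentration of measure and dimension bookkeeping.
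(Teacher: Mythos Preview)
Your overall architecture---induction on $t$, splitting each step into ``true index $\Rightarrow$ value near $0$'' and ``false index $\Rightarrow$ value near $1-\dim/m$'', Gaussian chi-square concentration for the false side, and a subspace perturbation bound for the noisy deviation---matches the paper exactly, and your dimension counts ($m-k(t)-r+1$ for deletion, $m-k(t)$ for addition) and the final algebra reducing the SNR condition to the perturbation inequality $\dfrac{2(\kappa(B(t))+1)\|N\|}{\sigma_{\min}(B(t))-\|N\|}<\dfrac{1-\gamma(1+\alpha)}{2}$ are correct.

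The one substantive difference is your route to the perturbation bound~(ii). You propose applying a Wedin/Davis--Kahan estimate directly to the \emph{combined} matrix $[B(t),A_I]$, which forces you to lower-bound $\sigma_{\min}([B(t),A_I])$---and you correctly flag this as the main obstacle, since $R(B(t))$ and $R(A_I)$ can make a small principal angle inside $R(A_{{\rm supp}X(t)})$. The paper sidesteps this entirely. It uses the projection update rule
\[
P_{R([B(t),\,A_I])}=P_{R(A_I)}+P_{R\bigl(P_{R(A_I)}^{\perp}B(t)\bigr)},\qquad
P_{R([Y(t),\,A_I])}=P_{R(A_I)}+P_{R\bigl(P_{R(A_I)}^{\perp}Y(t)\bigr)},
\]
so that the $A_I$-block cancels and
\[
\bigl\|P_{R([B(t),A_I])}^{\perp}-P_{R([Y(t),A_I])}^{\perp}\bigr\|
=\bigl\|P_{R(P_{R(A_I)}^{\perp}B(t))}-P_{R(P_{R(A_I)}^{\perp}Y(t))}\bigr\|
\le \bigl\|P_{R(B(t))}-P_{R(Y(t))}\bigr\|,
\]
the last inequality being a known fact (the paper cites \cite{Lee2010SAMUSIC}). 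The right-hand side is then bounded by the formula from \cite{Kim2010CMUSIC},
\[
\|P_{R(Y(t))}-P_{R(B(t))}\|\le \frac{2(\sigma_{\max}(B(t))+\sigma_{\min}(B(t)))\|N\|}{\sigma_{\min}(B(t))(\sigma_{\min}(B(t))-\|N\|)}
=\frac{2(\kappa(B(t))+1)\|N\|}{\sigma_{\min}(B(t))-\|N\|},
\]
with no reference whatsoever to $\sigma_{\min}([B(t),A_I])$ or to the numerical-rank selections $I_1(t),I_2(t)$. In short, the difficulty you anticipate is real for your route but simply does not arise in the paper's; adopting the projection-update reduction removes the only gap in your plan.
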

\begin{proof}
See Appendix~B.
\end{proof}

%
%
%


\section{NUMERICAL RESULTS}
\label{sec:results}

The first simulation is to demonstrate the performance of the proposed method to solve the time varying MMV problem in Eq.~\eqref{eq:timevaryingMMV} for different number of changes in supports at each time. We declared the algorithm as a success if the estimated support is the same as the true ${\rm supp}X$, and the success rates were averaged for $5000$ experiments. The simulation parameters were as follows: $m=40$, $n=100$, $r=9$, and $k\in \{ 1,2,\cdots,30 \}$, respectively. Elements of sensing matrix $A$ were generated by i.i.d. Gaussian random variable $\frac{1}{\sqrt{m}}\mathcal{N}(0,1)$, and Gaussian noise of ${\rm SNR}=40dB$ was added to each measurement vectors. At each time point, $X(t)^{{\rm supp}X(t)}$ is generated by $\mathcal{N}(0,1)$. Fig.\ref{fig:tracking_graph} shows the recovery rates of time varying MMV problem using support tracking method for $t=1,2,\cdots,5$ when the number of changed supports are $4,6,7$, and $8$ at each time point for Fig.\ref{fig:tracking_graph}(a)$\sim$(d), respectively. We used CS-MUSIC algorithm with S-OMP and then applied optimized partial support selection at $t=1$, and time varying supports are estimated by support tracking method recursively from $t=2$ to $t=5$. In Fig.\ref{fig:tracking_graph}, we can observe that the performance gracefully decreases as the number of changes in supports increases. An interesting observation is that the performance of the proposed method rather improves over time in Fig.\ref{fig:tracking_graph}(a) and (b). However, the recovery ratio is getting lower but converges over time when the number of changes in supports is close to the upper bound $r-1$ for perfect recovery in noiseless case.

\begin{figure}
   \begin{center}
   \begin{tabular}{c}
   \includegraphics[width=10cm, height=9cm]{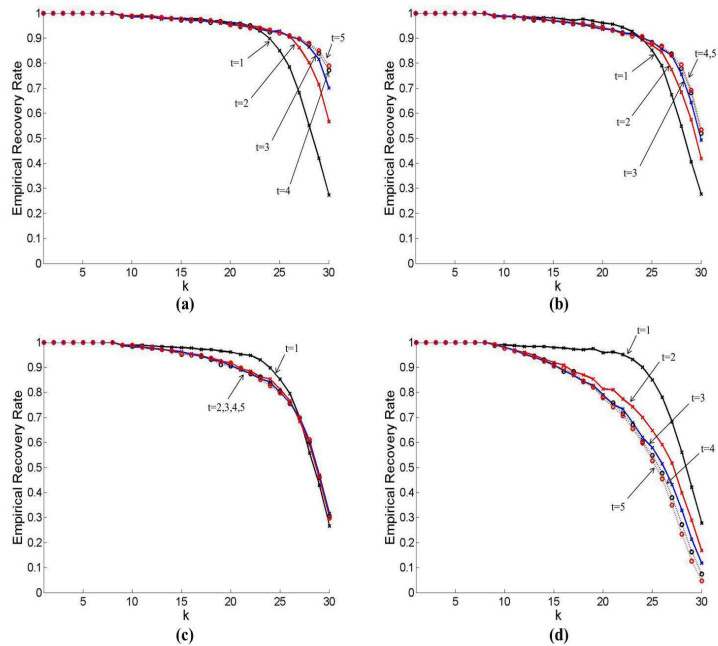}
   \end{tabular}
   \end{center}
   \caption[example]
   { \label{fig:tracking_graph}
Recovery rates of time varying MMV problem using support tracking method when $m=40$, $n=100$, $r=9$, SNR$=40$dB, and $t=1,2,\cdots,5$. The number of changes in supports at each time point is (a) $4$, (b) $6$, (c) $7$, and (d) $8$.}
\end{figure}

Next, we applied the proposed algorithm to target tracking problem in $2D$ image and compared it to MUSIC algorithm. The first row of Fig.\ref{fig:tracking_2D} indicates the original targets moving toward the direction of red arrows over time. Each column (from left to right) indicates the sampled image at $t=1,13,27$, and $t=41$, respectively. The simulation setting is the same with the previous one except $m=50$, $n=900$, $t=1,2,\cdots,45$, and each target have a chance to move with probability of $\frac{1}{2}\frac{r-1}{k}$ at each time point. The number of target $k$ is $24$. Here, we considered the number of measurement vectors is $50$ in the resting state, and used MUSIC algorithm to find supports at $t=0$. The second and third row of Fig.\ref{fig:tracking_2D} indicate the results of support tracking method and MUSIC algorithm, respectively. Note that the proposed method successfully follows the movement of original targets.

\begin{figure}
   \begin{center}
   \begin{tabular}{c}
   \includegraphics[width=13cm, height=9cm]{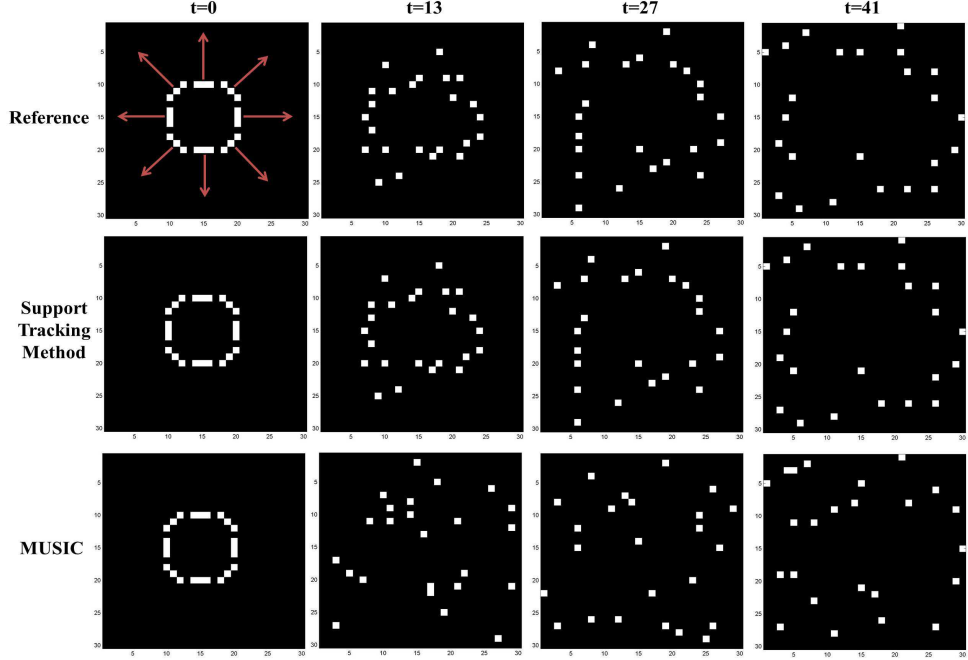}
   \end{tabular}
   \end{center}
   \caption[example]
   { \label{fig:tracking_2D}
The results of the target tracking problem in $2D$ image when $m=50$, $n=900$, $k=24$, and SNR$=40$dB. We set $r=50$ when $t=0$, and $r=9$ for $t>0$. The first row indicates the original targets moving toward the direction of red arrows over time. The second and third row indicate the results of support tracking method and MUSIC algorithm, respectively. Each column (from left to right) indicates the sampled image at $t=1,13,27$, and $t=41$, respectively.}
\end{figure}

\section{CONCLUSION}
\label{sec:conclusion}

This paper expanded the sparse recovery with partially known supports in single measurement vector problem to multiple measurement vector problem with joint sparsity and proposed the support tracking algorithm to recover the slowly time varying supports. It is based on the recently developed compressive MUSIC algorithm with optimized partial support selection. The estimated supports at previous time can be used in optimized partial support selection to recover partial supports at current time and it can be used in generalized MUSIC criterion to find remaining supports. We also provided the maximum allowable number of changes in supports with support tracking algorithm for exact reconstruction in noiseless case. Numerical results demonstrated that the proposed algorithm reliably reconstructs the time varying supports for various level of changes and successfully solves the target tracking problem in $2D$ image.

\section*{Appendix~A}
\label{ap:A}

\begin{proof}
We only need to show that if we have a correct support for $X(t-1)$, then we can also obtain a correct support estimation for $X(t)$ by the support selection criterion and the generalized MUSIC criterion. 
By the assumption, we have $m\geq 2k_{\max}\geq k(t-1)+r$ so that if we have 
$|{\rm supp}X(t)\cap I(t-1)|\geq k(t)-r+1$, then by Theorem \ref{thm-sfcriterion} we have for any $j\in I(t-1)$, 
$$j\in {\rm supp}X(t)~{\rm if~and~only~if}~\ab_j^{*}P_{Q(t)_{k,j}}\ab_j=0$$
where $Q(t)_{k,j}$ is the orthogonal complement of $R([Y(t)~A_{I(t-1)\setminus\{j\}})$. Since we have a noiseless MMV problem with slowly time varying pattern, we have $|{\rm supp}X(t)\setminus {\rm supp}X(t-1)|\leq r-1$ so that we have 
$|{\rm supp}X(t)\cap I(t-1)|\geq k(t)-r+1$ and we can identify the correct partial support of $X(t)$ which has at least $k(t)-r+1$ elements. Then, if we let $I(t)^a$ be the set of indices such that 
$$I(t)^a=\{j\in I(t-1):\ab_j^{*}P_{Q(t)_{k,j}}\ab_j=0\},$$
we have $I(t)^a\subset {\rm supp}X$ and $R([Y(t)~A_{I(t)^a}])\subset R(A_{{\rm supp}X(t)})$. On the other hand, if we take a set $I(t,r)\subset I(t)^a\subset {\rm supp}X$ such that $|I(t,r)|=k(t)-r$, we have 
$$R([Y(t)~A_{I(t)^a}])\supset R([Y(t)~A_{I(t,r)}])=R(A_{{\rm supp}X(t)})$$
which implies $R([Y(t)~A_{I(t)^a})=R(Y(t)~A_{I(t,r)})$. 
Since $0\leq\delta_{2k(t)-r+1}(A)\leq \delta_{2k_{\max}-r+1}(A)<1$, we can apply the generalized MUSIC criterion with $I(t,r)\subset {\rm supp}X$ where $|I(t,r)|=k(t)-r$. 
For $j\in I(t,r)$, we can easily see that 
$$\ab_j^*\left[P_{R(Q(t))}-P_{R(P_{R(Q(t))A_{I(t)^a}})}\right]\ab_j
=\ab_j^{*}P_{R([Y~A_{I(t)^a}])^{\perp}}\ab_j=\ab_j^{*}P_{R([Y~A_{I(t,r)}])^{\perp}}\ab_j=0.$$ On the other hand, for $j\notin I(t,r)$, by the generalized MUSIC criterion, we have $j\in {\rm supp}X(t)$ if and only if 
$$\ab_j^*\left[P_{R(Q(t))}-P_{R(P_{R(Q(t))A_{I(t)^a}})}\right]\ab_j=\ab_j^*\left[P_{R(Q(t))}-P_{R(P_{R(Q(t))A_{I(t,r)}})}\right]\ab_j=0.$$
Since $I(t,r)\subset {\rm supp}X$, we have $j\in {\rm supp}X$ if and only if 
$$\ab_j^*\left[P_{R(Q(t))}-P_{R(P_{R(Q(t))A_{I(t)^a}})}\right]\ab_j=0.$$
Hence, $|I(t)|=k(t)$ and $I(t)={\rm supp}X(t)$.
\end{proof}

\section*{Appendix~B}
\begin{proof}
Here, we let $B(t)=AX(t)$, $\sigma_{\min}(B(t))$(or $\sigma_{\min}(B(t))$) be the minimum (or the maximum) nonzero singular value of $B(t)$. Then $Y(t)=B(t)+N(t)$ is also of full column rank if $\|N(t)\|<\sigma_{\min}(B(t))$. By \cite{Kim2010CMUSIC}, for such an $N(t)$, we have 
\begin{equation}\label{proj-diff}
\|P_{R(Y(t))}-P_{R(B(t))}\|\leq \frac{2[\sigma_{\max}(B(t))+\sigma_{\min}(B(t))]\|N(t)\|}{\sigma_{\min}(B(t))(\sigma_{\min}(B(t))-\|N(t)\|)}.
\end{equation} 
By the projection update rule,  we have 
\begin{equation}\label{proj-update1}
P_{R([B(t)~A_{I_1(t)\setminus\{j\}}])}=P_{R(A_{I_1(t)\setminus\{j\}})}+
P_{R(P_{R(A_{I_1(t)\setminus\{j\}})}^{\perp}B(t))}
\end{equation}
and
\begin{equation}\label{proj-update2}
P_{R([Y(t)~A_{I_1(t)\setminus\{j\}}])}=P_{R(A_{I_1(t)\setminus\{j\}})}+
P_{R(P_{R(A_{I_1(t)\setminus\{j\}})}^{\perp}Y(t))}.
\end{equation}
Since $[B(t)~A_{I_1(t)\setminus\{j\}}]$ and $[Y(t)~A_{I_1(t)\setminus\{j\}}]$ are of full column rank,  by applying \eqref{proj-update1} and \eqref{proj-update2} as done in \cite{Lee2010SAMUSIC}, we have 
\begin{eqnarray}\label{perturb-diff}
\|P_{R([B(t)~A_{I_1(t)\setminus\{j\}}])}^{\perp}-P_{R([Y(t)~A_{I_1(t)\setminus\{j\}}])}^{\perp}\|
&=&\|P_{R(P_{R(A_{I_1(t)\setminus\{j\}})}^{\perp}B(t))}-P_{R(P_{R(A_{I_1(t)\setminus\{j\}})}^{\perp}Y(t))}\|\notag\\
&\leq&\|P_{R(B(t))}-P_{R(Y(t))}\|. 
\end{eqnarray}
Then for any $j\in I_1(t)\setminus{\rm supp}X$, we have 
\begin{eqnarray}\label{j-not-supp}
\ab_j^{*}P_{R([Y(t)~A_{I_1(t)\setminus\{j\}}])}^{\perp}\ab_j&=&\ab_j^{*}P_{R([B(t)~A_{I_1(t)\setminus\{j\}}])}^{\perp}\ab_j\notag\\
&&+~\ab_j^{*}\left[
P_{R([Y(t)~A_{I_1(t)\setminus\{j\}}])}^{\perp}-P_{R([B(t)~A_{I_1(t)\setminus\{j\}}])}^{\perp}\right]\ab_j\\
&\geq&\min_{j\notin{\rm supp}X}\ab_j^{*}P_{R([B(t)~A_{I_1(t)\setminus\{j\}}])}^{\perp}\ab_j-\max\limits_{1\leq j\leq n}\|\ab_j\|^2\|P_{R(Y(t))}-P_{R(B(t))}\|.\notag
\end{eqnarray}
Here, for each $1\leq j\leq n$, $m\|\ab_j\|^2$ is a chi-square random variable with degree of freedom $m$ so that we have by Lemma 3 in \cite{Fletcher2009nscond}, $\lim_{n\rightarrow\infty}\max_{1\leq j\leq n}\|\ab_j\|^2=1$ since $\lim_n (\log{n})/m=0$. Furthermore, for any $j\notin {\rm supp}X$, $\ab_j$ is independent of $P_{R([Y(t)~A_{I(t-1)\setminus\{j\}}])}^{\perp}$, so that $m\ab_j^{*}P_{R([Y(t)~A_{I_1(t)\setminus\{j\}}])}^{\perp}\ab_j$ is a chi-squared random variable whose degree of freedom is at least $m-k(t)-r+1$ since $P_{R([Y(t)~A_{I_1(t)\setminus\{j\}}])}^{\perp}$ is a projection operator onto the orthogonal couplement of $R([Y(t)~A_{I_1(t)\setminus\{j\}}])$. Since $\lim_{n\rightarrow\infty}
(\log{(n-k(t))})/(m-k(t)-r+1)=0$, again by Lemma 3 in \cite{Fletcher2009nscond}, we have 
$$\lim\limits_{n\rightarrow\infty}\frac{\min\limits_{j\notin{\rm supp}X}m\ab_j^{*}
P_{R([Y(t)~A_{I_1(t)\setminus\{j\}}])}^{\perp}\ab_j}{m-k(t)-r+1}\geq1$$ 
so that 
\begin{equation}\label{j-not-supp-1}
\lim\limits_{n\rightarrow\infty}\min\limits_{j\notin{\rm supp}X}\ab_j^{*}
P_{R([Y(t)~A_{I_1(t)\setminus\{j\}}])}^{\perp}\ab_j\geq 1-\gamma(1+\alpha)
\end{equation}
since $k(t)\leq k_{\max}$ for all $t=0,1,\cdots$.
On the other hand, if we use the definition of ${\sf SNR}_{\min}(Y(t))$ and the definition of the condition number of $B(t)$ on \eqref{proj-diff}, i.e. $\kappa(B(t))=(\sigma_{\max}(B(t)))/(\sigma_{\min}(B(t)))$, we have 
\begin{equation}\label{j-not-supp-2}
\|P_{R(Y(t))}-P_{R(B(t))}\|\leq \frac{2(\kappa(B(t))+1)}{{\sf SNR}_{\min}(Y(t))-1}<\frac{1-\gamma(1+\alpha)}{2},
\end{equation}
by the condition \eqref{snr-supp-selection}. Combining \eqref{j-not-supp}, \eqref{j-not-supp-1} and \eqref{j-not-supp-2}, we have for any $j\in I(t-1)\setminus {\rm supp}X$, we have 
\begin{equation*}
\ab_j^{*}P_{R([Y(t)~A_{I_1(t)\setminus\{j\}}])}^{\perp}\ab_j>\frac{1-\gamma(1+\alpha)}{2}.
\end{equation*}
On the other hand, for $j\in I(t-1)\cap {\rm supp}X(t)$, we have $\ab_j^{*}P_{R([B(t)~A_{I_1(t)\setminus\{j\}}])}^{\perp}\ab_j=0$ by the support selection criterion. Then, by the similar reasoning as above, we have for any $j\in I_1(t)\cap {\rm supp}X(t)$, we have 
\begin{equation*}
\ab_j^{*}P_{R([Y(t)~A_{I_1(t)\setminus\{j\}}])}^{\perp}\ab_j<\frac{1-\gamma(1+\alpha)}{2}.
\end{equation*}
This completes the proof for the threshold values for support selection criterion. The proof for the threshold values for generalized MUSIC are the same except that $m\ab_j^{*}P_{R([Y(t)~A_{I_2(t)}])}^{\perp}\ab_j$ is a chi-squared random variable whose degree of freedom is $m-k(t)$ for $j\notin {\rm supp}X(t)$. 
\end{proof}

\section*{Acknowledgment}

This work was supported by the Korea Science and Engineering Foundation (KOSEF) grant funded by the Korea government (MEST) (No.2010-0000855).

\ifCLASSOPTIONcaptionsoff
  \newpage
\fi

\bibliographystyle{IEEEbib}
\bibliography{totalbiblio_bispl}

\begin{thebibliography}{10}

\bibitem{Zhang2011TSBL}
Z.~Zhang and B.D. Rao,
\newblock ``Sparse signal recovery with temporally correlated source vectors
  using joint sparse bayesian learning,''
\newblock {\em IEEE J. of Selected Topics in Signal Processing}, vol. 5, no. 5,
  pp. 912--926, 2011.

\bibitem{Do06}
D.~L. Donoho,
\newblock ``Compressed sensing,''
\newblock {\em IEEE Trans. on Information Theory}, vol. 52, no. 4, pp.
  1289--1306, April 2006.

\bibitem{CaRoTa06}
E.~Candes, J.~Romberg, and T.~Tao,
\newblock ``Robust uncertainty principles: Exact signal reconstruction from
  highly incomplete frequency information,''
\newblock {\em IEEE Trans. on Information Theory}, vol. 52, no. 2, pp.
  489--509, Feb. 2006.

\bibitem{Vaswani2010MCS}
N.~Vaswani and W.~Lu,
\newblock ``Modified-{CS}: modifying compressive sensing for problems with
  partially known support,''
\newblock {\em IEEE Trans. Signal Process.}, vol. 58, no. 9, pp. 4595--4607,
  2010.

\bibitem{CaTa05}
E.~Candes and T.~Tao,
\newblock ``Decoding by linear programming,''
\newblock {\em IEEE Trans. on Information Theory}, vol. 51, no. 12, pp.
  4203--4215, Dec. 2005.

\bibitem{jin2011support}
Y.~Jin and B.D. Rao,
\newblock ``Support recovery of sparse signals in the presence of multiple
  measurement vectors,''
\newblock {\em arXiv preprint}, 2011,
\newblock http://arxiv.org/pdf/1109.1895.

\bibitem{Kim2010CMUSIC}
J.~M. Kim, O.~K. Lee, and J.~C. Ye,
\newblock ``Compressive {MUSIC}: a missing link between compressive sensing and
  array signal processing,''
\newblock {\em to appear in IEEE Trans. Inf. Theory}, 2011.

\bibitem{Kim2011CSMUSIC}
J.~M. Kim, O.~K. Lee, and J.~C. Ye,
\newblock ``Compressive {MUSIC} with optimized partial support for joint sparse
  recovery,''
\newblock {\em in Proc. IEEE Int. Symp. Inf. Theory (ISIT)}, 2011.

\bibitem{Donoho20204npss}
D.~L. Donoho,
\newblock ``{Neighborly polytopes and sparse solution of underdetermined linear
  equations},''
\newblock {\em Tech. report, Department of Statistics, Stanford University},
  2005.

\bibitem{DoEl03}
D.~L. Donoho and M.~Elad,
\newblock ``Optimally sparse representation in general (non-orthogonal)
  dictionaries via $l_1$ minimization,''
\newblock {\em Proceedings of the National Academy of Sciences of the United
  States of America}, vol. 100, no. 5, pp. 2197--2202, 2003.

\bibitem{chen2006trs}
J.~Chen and X.~Huo,
\newblock ``{Theoretical results on sparse representations of multiple
  measurement vectors},''
\newblock {\em IEEE Trans. on Signal Processing}, vol. 54, no. 12, pp.
  4634--4643, 2006.

\bibitem{Feng97}
P.~Feng,
\newblock {\em Universal minimum-rate sampling and spectrum-blind
  reconstruction for multiband signals},
\newblock Ph.{D}. dissertation, University of Illinois, Urbana-Champaign, 1997.

\bibitem{Davies2010Rank}
M.~E. Davies and Y.~C. Eldar,
\newblock ``Rank awareness for joint sparse recovery,''
\newblock {\em preprint}, 2010,
\newblock http://arxiv.org/PS\_cache/arxiv/pdf/1004/1004.4529v1.pdf.

\bibitem{schmidt1986multiple}
R.~Schmidt,
\newblock ``{Multiple emitter location and signal parameter estimation},''
\newblock {\em IEEE Trans. on Antennas and Propagation}, vol. 34, no. 3, pp.
  276--280, 1986.

\bibitem{gribonval2008aac}
R.~Gribonval, H.~Rauhut, K.~Schnass, and P.~Vandergheynst,
\newblock ``{Atoms of all channels, unite! Average case analysis of
  multi-channel sparse recovery using greedy algorithms},''
\newblock {\em Journal of Fourier Analysis and Applications}, vol. 14, no. 5,
  pp. 655--687, 2008.

\bibitem{Eldar2010aca}
Y.C. Eldar and H.~Rauhut,
\newblock ``{Average case anlysis of multichannel sparse recovery using convex
  relaxation},''
\newblock {\em IEEE Trans. on Information Theory}, vol. 56, pp. 505--519, 2010.

\bibitem{Lee2010SAMUSIC}
K.~Lee and Y.~Bresler,
\newblock ``Subspace-augmented music for joint sparse recovery,''
\newblock {\em preprint}, 2010,
\newblock http://arxiv.org/PS\_cache/arxiv/pdf/1004/1004.3071.pdf.

\bibitem{malioutov2005ssr}
D.~Malioutov, M.~Cetin, and AS~Willsky,
\newblock ``{A sparse signal reconstruction perspective for source localization
  with sensor arrays},''
\newblock {\em IEEE Trans. on Signal Processing}, vol. 53, no. 8, pp.
  3010--3022, 2005.

\bibitem{KCJBY2011}
J.M. Kim, W.H. Chang, B.C. Jung, D.~Baron, and J.C. Ye,
\newblock ``Belief propagation for joint sparse recovery,''
\newblock {\em arXiv preprint}, 2011,
\newblock http://arxiv.org/pdf/1102.3289.

\bibitem{Fletcher2009nscond}
S.~Rangan A.K.~Fletcher and V.K. Goyal,
\newblock ``Necessary and sufficient conditions for sparsity pattern
  recovery,''
\newblock {\em IEEE Trans. on Inform. Theory}, vol. 55, no. 12, pp. 5758--5772,
  December 2009.

\end{thebibliography}

\end{document}